\newtheorem{lemma}{Lemma}
\DeclareSymbolFont{bbold}{U}{bbold}{m}{n}
\DeclareSymbolFontAlphabet{\mathbbold}{bbold}
\newcommand{\bolda}{\mathbf{a}}
\newcommand{\boldb}{\mathbf{b}}
\newcommand{\boldc}{\mathbf{c}}
\newcommand{\boldp}{\mathbf{p}}
\newcommand{\boldq}{\mathbf{q}}
\newcommand{\boldv}{\mathbf{v}}
\def\namedlabel#1#2{\begingroup
	\def\@currentlabel{#2}%
	\label{#1}\endgroup
}
\begin{document}
	\title{On Constant-Weight Binary $B_2$-Sequences} 
 \author{
   \IEEEauthorblockN{
   \textbf{Jin Sima}, \textbf{Yun-Han Li}, \textbf{Ilan Shomorony}  and \textbf{Olgica Milenkovic}}
   \IEEEauthorblockA{Department of Electrical and Computer Engineering, University of Illinois Urbana-Champaign, USA \\\texttt{\{jsima,yunhanl2,ilans,milenkov\}@illinois.edu}
   }
 }
	\maketitle
	
\begin{abstract}
Motivated by applications in polymer-based data storage we introduced the new problem of characterizing the code rate and designing constant-weight binary $B_2$-sequences. Binary $B_2$-sequences are collections of binary strings of length $n$ with the property that the real-valued sums of all distinct pairs of strings are distinct. In addition to this defining property, constant-weight binary $B_2$-sequences also satisfy the constraint that each string has a fixed, relatively small weight $\omega$ that scales linearly with $n$. The constant-weight constraint ensures low-cost synthesis and uniform processing of the data readout via tandem mass spectrometers. Our main results include upper bounds on the size of the codes formulated as entropy-optimization problems and constructive lower bounds based on Sidon sequences.
\end{abstract}
	
\thispagestyle{empty}

\section{Introduction}\label{section:introduction}	
Binary $B_2$-sequences  were introduced by Lindstr\"om in~\cite{lindstrom1969determination} and were subsequently studied in a number of follow-up works~\cite{lindstrom1972b2,cohen2001binary,della2022note}. Binary $B_2$-sequences represent a set (codebook) of binary vectors of some fixed length such that the entry-wise real-valued sums of all pairs of codevectors from the set are distinct; hence, given the sum one can uniquely determine the vectors that were summed up. Since their introduction, these sequences have found many applications such as for search algorithms~\cite{d2014lectures,bouvel2005combinatorial}, multiple access system design~\cite{gritsenko2017signature,polyanskiy2018information}, and data fingerprinting~\cite{boneh1998collusion,cohen2001binary}. 

Some more recent applications of binary $B_2$-sequences include \emph{polymer-based data storage.} Nonvolatile storage systems based on DNA and other native macromolecules and synthetic polymers hold the promise of ultrahigh storage densities, long-term readout compatibility and exceptional durability~\cite{goldman2013towards,Church2012Next,grass2015robust,yazdi2015rewritable,yazdi2017portable,tabatabaei2020dna,pan2021rewritable,tabatabaei2022expanding}. Synthetic polymers are binary molecular storage media that represent $0$s and $1$s using polymers of significantly different masses~\cite{laure2016coding}. A user-defined binary string is created by stitching together the polymer symbols in the required order and it is read by measuring the masses of prefixes and suffixes (or all substrings) of the polymer strings~\cite{acharya2015string,pattabiraman2019reconstruction}. To ensure unique reconstruction of mixture of polymer strings based on their prefix and suffix compositions only, one needs to follow a more involved process, described in~\cite{gabrys2021reconstructing,gabrys2022reconstruction}. There, binary $B_h$-sequences (codes)~\cite{lindstrom1969determination,cohen2001binary,della2022note} are used to ensure that the sums of masses of prefixes of the same length uniquely determine the strings themselves. Since in practice the polymer used to represent $1$s has a significantly higher mass than the polymer used to represent $0$s, the mass discrepancy can lead to high fragmentation loss and significantly increased chemical synthesis cost, using binary $B_2$-sequences of relatively small weight is desirable. This motivates introducing the problem of \emph{constant-weight $B_2$-sequence} design. 
 
The main results of our work include information-theoretic, prefix-suffix splitting upper bounds on the size of constant-weight binary $B_2$-sequences for which the weight $\omega$ scales linearly with $n$. Unlike its unconstrained counterpart, the strongest upper bound is given in terms of an optimization problem that has to be solved numerically. In addition, we also provide constructive lower bounds based on Sidon sequences.

The paper is organized as follows. Section~\ref{section:preliminary} presents the notation, relevant concepts and a generalization of the approach from~\cite{lindstrom1969determination} to the case of constant-weight binary strings. Section~\ref{section:boundsdalai} contains our main result, the sharpest known upper bound on the size of binary constant-weight $B_2$-sequences. Constructive lower bounds are presented in Section~\ref{sec:lowerbound}.
 
\section{Preliminaries and Entropy Bounds}\label{section:preliminary}

We denote sets by calligraphic upper-case letters and vectors by boldface lower-case letters. Cardinalities of sets are denoted by upper-case letters. We also use $[n]$ to denote the set $\{{1,2,\ldots,n\}}$. All logarithms, unless stated otherwise, are taken base-$2$.

A set $\mathcal{A}_n\subset\{0,1\}^n$ of binary vectors is called a $B_2$-sequence set if real-valued sums of all distinct pairs of strings $\boldc_1+\boldc_2$, $\boldc_1,\boldc_2\in\mathcal{A}_n,$ are distinct. A $B_2$-sequence set $\mathcal{A}^\omega_n$ is said to have constant weight $\omega$ if every vector $\boldc=(c_1,\ldots,c_n)\in \mathcal{A}^\omega_n$ has Hamming weight $|\boldc|\triangleq\sum^n_{i=1}c_i=\omega$. 
 
 Let $A^\omega_n$ be the size of the largest constant-weight $B_2$-sequence set $\mathcal{A}^\omega_n$ of weight $\omega$. We are interested in the asymptotic behavior of $A^\omega_n$, for constant $\bar\omega=\frac{\omega}{n}$, or more precisely, in the asymptotic code rate $R_{\bar\omega}=\limsup_{n\rightarrow \infty}\frac{\log A^{\bar\omega n}_n}{n}$. The results established in~\cite{lindstrom1969determination,cohen2001binary} imply that for unrestricted binary $B_2$-sequence sets, the asymptotic code rate satisfies $\le 0.5753$, which also establishes $R_{\bar\omega} \le 0.5753$ for any $\bar\omega\in[0,1]$. We seek to improve this upper bound on $R_{\bar\omega}$ for $\bar\omega\in [0,1/2)$.

A simple asymptotic upper bound on $R_{\bar\omega}$ can be derived using information-theoretic arguments (attributed to Katona for the case of unrestricted sequences), by assuming a \emph{uniform probability distribution} on the set of all possible \textit{ordered} pairs $(\boldsymbol{C},\boldsymbol{C}')$ of codevectors and invoking the fact that each pair results in a unique sum. Note that $C$ and $C'$ are chosen independently but are allowed to be equal. To obtain an upper bound on $R_{\bar\omega}$, we assume that the size of the constant-weight $B_2$ code of length $n$ equals $A_n^{\omega}$. Then, the entropy of all ordered pairs of codevectors equals $2\, \log(A_n^{\omega})$. 
Let $G=\mathbbm{1}( \boldsymbol{C}>\boldsymbol{C}')$ be an indicator random variable for the event that $\boldsymbol{c}$ is lexicographically ranked higher than $\boldsymbol{c}'$.  Since the sums of all unordered pairs are all distinct, 
there is a bijection between the conditional probability space of the pair $(\boldsymbol{C},\boldsymbol{C}')$ given $G$ and the probability space of a sequence of random variables, $X_i=C_i+C'_i$, $i\in[n]$, representing the coordinates of the sum vector. 
Consequently,
\begin{align}\label{eq:subadditivity}
2\log(A^{\omega}_n) &= H(\boldsymbol{C},\boldsymbol{C}'|G)+H(G) \\ \notag
&=H(X_1,X_2,\ldots,X_n|G)+H(G)\\\notag
&\leq H(X_1)+H(X_2)+\ldots+H(X_n)+1. \notag
\end{align}
Assume that the probability of observing a $1$ at the $i$th coordinate (i.e., $C_i$) of the $B_2$-sequence code equals $p_i$, $i=1,\ldots,n$. Given that the weight of the binary vectors is $\omega$, we have $\frac{1}{n}(p_1+p_2+\ldots+p_n)=\frac{\omega}{n}=\bar\omega$. The entropy of the $i$th coordinate of all possible $2$-sums equals $H(X_i)$ and is equal to the entropy of a Binomial$(2,p_i)$ distribution, i.e., the distribution $\textbf{p}^i$ with probabilities of $0$,
$1$ and $2$ equal to
$$p^i_{0}=\left(1-p_i\right)^2,\; p^i_{1}=2\, p_i\,\left(1-p_i\right),\; p^i_{2}= p_i^2, \; i\in[n]$$
respectively. Note that $H(X_i)=H(\boldp_i)$ equals
\begin{align}
&-p^2_i\log p^2_i-(1-p_i)^2\log(1-p_i)^2-2p_i(1-p_i)\log p_i\nonumber\\
&- 2p_i(1-p_i)\log (1-p_i)-2p_i(1-p_i)\nonumber\\
=&-2p_i\log p_i-2(1-p_i)\log(1-p_i)-2p_i+2p^2_i=H_{bin}(p_i). \nonumber
\end{align}
Since 
$\frac{d^2H_{bin}(p_i)}{d p^2_i}=-\frac{2}{\ln (2)}(\frac{1}{p_i}+\frac{1}{1-p_i})+4<0$, the function $H_{bin}(p_i)$ is concave in $p_i$. 
Hence, we have 
\begin{align}\label{eq:sumofentropy} &H(X_1)+H(X_2)+\ldots+H(X_n)\nonumber\\
& =H_{bin}(p_1)+H_{bin}(p_2)+\ldots+H_{bin}(p_n)\nonumber\\
& \le nH_{bin}\left(\frac{\sum^n_{i=1}p_i}{n}\right)=nH_{bin}\left(\bar\omega\right).
\end{align}
Note that $H_{bin}(\bar\omega)$ stands for the entropy of a Binomial$(2,\frac{\omega}{n})$ distribution, i.e., the distribution
$$p_0=\left(1-\bar\omega\right)^2,\; p_1=2\, \bar\omega\,\left(1-\bar\omega\right),\; p_2= \bar\omega^2.$$
By \eqref{eq:subadditivity} and \eqref{eq:sumofentropy} it follows that $2\log(A_n^{\omega}) \leq n  H_{bin}\left(\bar\omega\right)+1$ and
\begin{align}\label{eq:entropyboundsum}
R_{\bar\omega} \leq \frac{1}{2}\, H_{bin}\left(\bar\omega\right).   
\end{align}
While \eqref{eq:entropyboundsum} provides a good starting upper bound for $R_{\bar\omega}$, we note that an alternative entropy bound can be obtained by considering the coordinates of the difference of codevector, rather than the sum of codevectors. Specifically, we assume a uniform distribution on the set of ordered pairs of codevectors $(\boldsymbol{C},\boldsymbol{C}')$ in a constant-weight $B_2$ code, so that the entropy of the ordered pair is as before given by $2\log (A_n^\omega)$. Let $Y_i=C_i-C'_i$, $i\in[n]$, be the sequence of random variables 
representing the values of coordinates of the difference of vector in the ordered pair. Then, either one of the following holds: (1) $Y_i=0$ for $i\in[n]$, whenever the two vectors in the ordered pair are equal; (2) there is a one to one mapping between $Y_1,Y_2,\ldots,Y_n$ and the ordered pair if the two vectors in the pair are not equal. This follows because if any two different pairs $\boldc_1,\boldc_2\in A^\omega_n$ and $\boldc_1,\boldc_2\in A^\omega_n$ have the same difference $\boldc_1-\boldc_2=\boldc_3-\boldc_4$, then the two pairs $\boldc_1,\boldc_4$ and $\boldc_2,\boldc_3$ 
have the same sum $\boldc_1+\boldc_4=\boldc_2+\boldc_3$, which violates the $B_2$ constraint. Next, let $E$ be the event $\{(Y_1,\ldots,Y_n)=0^n\}$. Then, we can upper-bound $2\log (A_n^\omega)$ by 
\begin{align*}
   &H(C,C'|Y_1,Y_2,\ldots,Y_n)+H(Y_1,Y_2,\ldots,Y_n)\\
   =&\left(1-\frac{1}{A_n^\omega}\right)H(C,C'|Y_1,Y_2,\ldots,Y_n,E^c)\\
   &+\frac{1}{A_n^\omega}H(C,C'|Y_1,Y_2,\ldots,Y_n,E)
   +H(Y_1,Y_2,\ldots,Y_n)\\
   =&\frac{\log A_n^\omega}{A_n^\omega}+H(Y_1,Y_2,\ldots,Y_n)\\
   \le &H(Y_1)+H(Y_2)+\ldots+H(Y_n)+\frac{\log A_n^\omega}{A_n^\omega},
\end{align*}
where $(1-\frac{1}{A_n^\omega})$ equals the probability that the two codevectors in the pair are not equal. Assume that $P\{{C_i=1\}}=q_i$. Then, as before, we have $\frac{1}{n}(q_1+q_2+\ldots +q_n)=\bar\omega$. The distribution $\boldq^i$ of $Y_i$ is given by
$$q^i_0=q^2_i+(1-q_i)^2,\; q^i_1=q_i(1-q_i), \;q^i_{-1}=q_i(1-q_i),\; i\in[n].$$
Since the entropy function $H(\boldq^i)$ is concave in $\boldq^i$, 
\begin{align}
    &H(Y_1)+H(Y_2)+\ldots+H(Y_n) \nonumber \\
    =&H(\boldq^1)+H(\boldq^2)+\ldots+H(\boldq^n)\nonumber\\
     \le &nH(\frac{\boldq^1+\boldq^2+\ldots+\boldq^n}{n})\nonumber\\
    =&n\Big(-t\log t-(1-t)\log (\frac{1-t}{2})\Big), \nonumber
\end{align}
where $t = \frac{\sum^n_{i=1}[q^2_i+(1-q_i)^2]}{n}$. Note that $t\ge \bar\omega^2+(1-\bar\omega)^2\ge \frac{1}{2}$ by Jensen's inequality, and that for $t\ge \frac{1}{2}$ the function $-t\log t-(1-t)\log (\frac{1-t}{2})$ decreases as $t$ increases. Hence, by the previous inequality we have
\begin{align*}
    H(Y_1)+H(Y_2)+\ldots+H(Y_n)\le nH(\boldq),
\end{align*}
where the distribution $\boldq$ is given by
$$q_0=\bar\omega^2+(1-\bar\omega)^2,\;q_1=\bar\omega(1-\bar\omega),\;q_{-1}=\bar\omega(1-\bar\omega).$$
As a result, $2\log(A_n^{\omega}) \leq n  H(\boldq)+\frac{\log A_n^{\omega}}{A_n^{\omega}},$
and
\begin{align}\label{eq:entropybounddiff}
R_{\bar\omega} \leq \frac{1}{2} H(\boldq).    
\end{align}
It can easily be shown that the upper bound~\eqref{eq:entropyboundsum} is strictly better than~\eqref{eq:entropybounddiff}. However, as we will see later, the bound \eqref{eq:entropybounddiff} is more useful for deriving upper bounds than those obtained from direct information-theoretic arguments. 

In Table \ref{tab:results}, we provide numerical values for the information-theoretic bounds derived in this section, along with those of the improved upper bounds to be described in the subsequent exposition, for different values of weights $\omega$ that scale linearly with $n$. As may be seen, as $\omega$ decreases, all upper bounds become closer in value. Additional results included in the table are lower bounds, discussed in more detail in the last section.

\begin{table}[h!]
\caption{Upper and Lower Bounds on the size of binary, constant-weight $B_2$-sequences}
\label{tab:results}
\centering
\begin{tabular}{ |p{2.07cm}||p{0.46cm}|p{0.46cm}|p{0.46cm}|p{0.46cm}|p{0.46cm}|p{0.46cm}|p{0.46cm}| }
 \hline
 The value of $\bar\omega$& 0.5&0.4&0.345&0.2&0.1&0.05&0.02\\
 \hline
 Entropy bound \eqref{eq:entropyboundsum}&0.75&0.731&0.704&0.562&0.379&0.239&0.122\\
 \hline
 Entropy bound \eqref{eq:entropybounddiff}&0.75&0.739&0.723&0.612&0.43&0.274&0.139\\
 \hline
 Upper bound \eqref{eq:upperboundlindstrom} &0.6&0.6&0.594&0.515&0.365&0.235&0.121\\
  \hline
  Upper bound \eqref{eq:upperbounddalai} &0.6&0.59&0.575&0.487&0.349&0.228&0.12\\
 \hline
    Lower bound 
    &0.25&0.259&0.263&0.232&0.166&0.108&0.056\\
 \hline 
\end{tabular}
\vspace{-0.1in}
\end{table}

\vspace{-0.25in}
\subsection{Improved Upper Bounds}\label{section:boundsdalai}
An improved upper bound on $R_{\bar\omega}$ for our constrained $B_2$ codebook design can be obtained by adapting and generalizing a recent approach from~\cite{della2022note} for unconstrained $B_2$ vectors. The underlying proof combines entropy bounds with a prefix-suffix decomposition approach first reported in~\cite{lindstrom1972b2}. For completeness, we first describe how to extend the prefix-suffix decomposition approach for constant-weight $B_2$ codebooks. Afterwards, we improve the two bounds -- the information-theoretic and prefix-suffix bound -- by combining them~\cite{della2022note}. The main differences between the approaches designed for general codebooks~\cite{della2022note,lindstrom1972b2} and our approach is that we use more elaborate entropy bounds, group the codevectors $\boldc\in\mathcal{A}^\omega_n$ based on the weight of their prefixes and invoke specialized counting techniques. Importantly, the approach in~\cite{della2022note} does not improve the result that can be obtained purely through the use of prefix-suffix decompositions~\cite{lindstrom1972b2}, while our scheme improves \emph{both the entropy and prefix-suffix approach} for constant-weight codebooks. The proof is deferred to Appendix~\ref{sec:proofoflindstrom}. 

\begin{lemma}\label{lemma:weightlindstrom} 
Let every codevector be split as $\boldc=\bolda\boldb\in \mathcal{A}^\omega_n$, where $\bolda\in\{0,1\}^e$ and $\boldb\in\{0,1\}^{n-e}$. Let $e=\bar e n$, where $\bar e$ is a constant in $[0,1]$. 
Then, $R_{\bar\omega}$ can be upper bounded by
\begin{align}\label{eq:upperboundlindstrom}
&\min_{\bar e\in[0,1]}\max_{\bar\omega'\in[\max\{0,\bar\omega-1+\bar e\},\min\{\bar e,\bar\omega\}]}\max\Bigg\{H(\frac{\bar\omega'}{\bar e})\cdot \bar e, \nonumber\\
    &\frac{1}{2}\Big[H(\frac{\bar\omega'}{\bar e})\cdot \bar e
  +H(\frac{2\bar\omega''}{1-\bar e})\cdot(1-\bar e)+2\bar\omega''\Big]\Bigg\},
\end{align}
where $\bar\omega''=\min\{\bar\omega-\bar\omega',\frac{1-\bar e}{4}\}$. By optimizing over $\bar e$, numerical values for the bound can be found for different $\omega'$s. A sampling of the results is shown in Table~\ref{tab:results}.
 \end{lemma}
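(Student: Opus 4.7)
The plan is to combine the prefix-suffix decomposition of Lindström~\cite{lindstrom1972b2} with the difference-based entropy bound~\eqref{eq:entropybounddiff}, specialized to the constant-weight setting. First I would fix $\bar e$, split every codevector $\boldc = \bolda \boldb$ with $|\bolda| = e = \bar e n$, and partition $\mathcal{A}^\omega_n$ by prefix weight: for each integer $\omega' \in [\max\{0,\omega-n+e\},\min\{e,\omega\}]$ let $\mathcal{A}_{\omega'} = \{\boldc \in \mathcal{A}^\omega_n : |\bolda|=\omega'\}$. Because the prefix weight takes only $O(n)$ values, $\log A^\omega_n = \max_{\omega'}\log|\mathcal{A}_{\omega'}| + O(\log n)$, so in the asymptotic rate it suffices to bound $\log|\mathcal{A}_{\omega'}|/n$ uniformly in $\omega'$; this yields the outer $\max_{\bar\omega'}$ and the stated range comes from the constraint that both the prefix weight $\omega'$ and the suffix weight $\omega-\omega'$ fit in their respective lengths.

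For a fixed $\omega'$ I would derive two complementary upper bounds on $|\mathcal{A}_{\omega'}|$ which together give $\log|\mathcal{A}_{\omega'}|/n \le \max\{T_1,T_2\}$. The first bound (Term~1) is a pure counting estimate: since every codeword has a prefix in the set of weight-$\omega'$ binary vectors of length $e$, we get $|\mathcal{A}_{\omega'}| \le \binom{e}{\omega'} \le 2^{eH(\bar\omega'/\bar e)}$, which dominates in the ``spread'' regime where each prefix carries few codewords. The second bound (Term~2) adapts the derivation of~\eqref{eq:entropybounddiff} to $\mathcal{A}_{\omega'}$. Taking a uniform ordered pair $(\boldC,\boldC')$ with difference $\boldY = \boldC - \boldC'$, the global $B_2$ property makes the map $(\boldC,\boldC')\mapsto \boldY$ injective off the diagonal, so $2\log|\mathcal{A}_{\omega'}| \le H(\boldY) + O(\log|\mathcal{A}_{\omega'}|/|\mathcal{A}_{\omega'}|)$. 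Splitting $\boldY = (\boldY^{\mathrm{pref}},\boldY^{\mathrm{suf}})$ and using subadditivity, I would bound $H(\boldY^{\mathrm{pref}})$ by the support size of the prefix, namely $\log \binom{e}{\omega'}\le eH(\bar\omega'/\bar e)$, and bound $H(\boldY^{\mathrm{suf}})$ by reproducing the concavity-plus-monotonicity chain from~\eqref{eq:entropybounddiff}, applied to the constant-weight suffix with weight fraction $\tilde p = (\bar\omega-\bar\omega')/(1-\bar e)$. Dividing by $2n$ gives the $\tfrac{1}{2}$ factor in Term~2.

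The worst-case parameter $\bar\omega'' = \min\{\bar\omega-\bar\omega',(1-\bar e)/4\}$ arises at the monotonicity step: the per-coordinate probability that $Y^{\mathrm{suf}}_i\ne 0$ equals $2\tilde p(1-\tilde p)$, which by Jensen is at most $2\tilde p(1-\tilde p)\le \min\{2\tilde p,1/2\}$; rescaling by the suffix length translates into $\bar\omega''\le \min\{\bar\omega-\bar\omega',(1-\bar e)/4\}$. Since the suffix-entropy function $H(r)+r$ is increasing on $[0,2/3]$, plugging in the largest allowed value of $r = 2\bar\omega''/(1-\bar e)$ gives a valid upper bound. Finally, taking the worst $\bar\omega'$ and optimizing over $\bar e$ produces the displayed bound.

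The main obstacle is step two in the derivation of Term~2: getting the $\tfrac{1}{2}$ factor out in front of both the prefix and suffix contributions simultaneously. This requires that the $B_2$-induced injectivity of $(\boldC,\boldC')\mapsto \boldY$ be combined with the weight-$\omega'$ constraint on prefixes so that $H(\boldY^{\mathrm{pref}})$ is controlled by a single $\binom{e}{\omega'}$ factor rather than the $\binom{e}{\omega'}^2$ one would get by a naive $H(\boldA)+H(\boldA')$ split; intuitively, the $B_2$ constraint forces the prefix difference together with the suffix difference to carry the same information as the full ordered pair, so the factor $1/2$ is shared across both parts. A secondary subtlety is handling the correction term from the diagonal event $\boldY=0^n$, which contributes only $O(\log|\mathcal{A}_{\omega'}|/|\mathcal{A}_{\omega'}|)=o(1)$ after dividing by $n$.
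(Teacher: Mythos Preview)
Your proposal has a genuine gap at exactly the point you flag as the ``main obstacle,'' and the hand-wave you offer does not close it. Term~1 as you state it is simply false: the inequality $|\mathcal{A}_{\omega'}|\le \binom{e}{\omega'}$ would require distinct codewords to have distinct prefixes, which nothing in the $B_2$ condition guarantees (take $e$ small to see this fail badly). Term~2 has the problem you acknowledge: if you pick a uniform pair from $\mathcal{A}_{\omega'}$ and look at the full difference $\boldY$, the prefix part $\boldY^{\mathrm{pref}}=\boldA-\boldA'$ ranges over differences of two weight-$\omega'$ vectors, and its support size is of order $\binom{e}{\omega'}^2$, not $\binom{e}{\omega'}$; there is no mechanism in your argument that collapses this to a single factor. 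The $B_2$ property tells you the \emph{full} difference determines the pair, but subadditivity already throws away the joint structure, so you cannot recover the missing factor of $\tfrac12$ on the prefix side after splitting.

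The paper's proof avoids both issues by never looking at prefix differences at all. It groups codewords by their exact prefix $\bolda_i$, lets $\mathcal{S}_i$ be the set of suffixes attached to $\bolda_i$, and considers only the multiset $\mathcal{D}$ of ordered suffix pairs \emph{within the same group}. Cauchy--Schwarz gives $|\mathcal{D}|=\sum_i|\mathcal{S}_i|^2\ge (B^{\omega'}_n)^2/r$ with $r\le\binom{e}{\omega'}$; this is where the single $\binom{e}{\omega'}$ factor enters with the right exponent. The $B_2$ property then says that nonzero suffix differences in $\mathcal{D}$ are distinct, and a Lindstr\"om-style combinatorial count (using that each suffix has weight $\omega-\omega'$, so each nonzero difference has at most $2(\omega-\omega')$ nonzero coordinates) yields $|\mathcal{D}|\le (f{+}1)B^{\omega'}_n + 2^{fH(2\omega''/f)+2\omega''+O(\log n)}$. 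Combining the upper and lower bounds on $|\mathcal{D}|$ gives a quadratic inequality in $B^{\omega'}_n$ whose two regimes produce the two terms inside the $\max$; Term~1 is not a standalone counting bound but the case where the linear-in-$B$ contribution (coming from the $B^{\omega'}_n$ zero-difference pairs) dominates. Your entropy treatment of the suffix could in principle replace the combinatorial count, but you still need the Cauchy--Schwarz grouping step to get the prefix contribution right.
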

Based on the above result and its proof, we describe next our main result, constituting a sharper asymptotic upper bound on constant-weight binary $B_2$-sequences. Let
\begin{align}
\mathcal{B}^{\omega'}_n=\{\boldc:\boldc=\bolda\boldb\in\mathcal{A}^\omega_n,\;\bolda\in\{0,1\}^e,\;|\bolda|=\omega'\},
\end{align}
where $e=\bar e n$ and $\omega'=\bar\omega' n$ are constants s.t. $\bar e,\bar\omega'\in[0,1]$,
be the set of codevectors in $\mathcal{A}^\omega_n$ whose prefixes of length $e$ have weight $\omega'\in[\max\{0,\omega-n+e\},\min\{e,\omega\}]$. For notational convenience, we also use $f=n-e$ to denote the length of the suffixes. 
Note that 
\begin{align}\label{eq:aomega}
    A^\omega_n=\sum_{\omega'\in[\max\{0,\omega-f\},\min\{e,\omega\}]}|\mathcal{B}^{\omega'}_n|.
\end{align}
Hence, we need to establish an upper bound on $|\mathcal{B}^{\omega'}_n|$ for any $n$, $e$ and $\omega'\in [\max\{0,\omega-f\},\min\{e,\omega\}]$.
\begin{lemma}\label{lemma:weightdalai}
For any $\omega'\in[\max\{0,\omega-f\},\min\{e,\omega\}]$, we have
\begin{align}\label{eq:weightdalai1}
 \log |\mathcal{B}^{\omega'}_n|\le &\max\Bigg\{e\,H\left(\frac{\omega'}{e}\right)+\log n, \nonumber\\
  &\frac{1}{2}\Big[e\,H\left(\frac{\omega'}{e}\right)+fH(p_0,p_1,p_{-1})\Big]+1\Bigg\}, 
\end{align}
where $p_0=\frac{(\omega'')^2+(f-\omega'')^2}{f^2}$,  $p_1=p_{-1}=\frac{1-p_0}{2}$, and 
$\omega''=\omega-\omega'$. The function $H(x)=-x\log x-(1-x)\log (1-x)$ stands for the binary Shannon entropy function, while the function $H(p_0,p_1,p_{-1})=-p_0\log p_0-p_1\log p_1-p_{-1}\log p_{-1}$ stands for the entropy of a ternary random variable with the distribution $(p_0,p_1,p_{-1})$ described above. 
\end{lemma}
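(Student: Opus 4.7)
The plan is to establish both bounds in the maximum through a case analysis on the relative sizes of $M = |\mathcal{B}^{\omega'}_n|$ and the number $N_e \le \binom{e}{\omega'}$ of distinct prefixes appearing among the codewords of $\mathcal{B}^{\omega'}_n$. Let $P_\bolda$ denote the number of codewords of $\mathcal{B}^{\omega'}_n$ with prefix $\bolda$, so that $M = \sum_\bolda P_\bolda$ and $N_e = |\{\bolda : P_\bolda \ge 1\}|$.

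The key combinatorial step is a suffix-difference uniqueness claim that applies the $B_2$ property transversally across different prefixes, in the spirit of the derivation of~\eqref{eq:entropybounddiff}. I would argue that if two distinct ordered same-prefix pairs $(\bolda\boldb_1, \bolda\boldb_2)$ and $(\bolda'\boldb_3, \bolda'\boldb_4)$ in $\mathcal{B}^{\omega'}_n$ satisfy $\boldb_1 \neq \boldb_2$ and $\boldb_1 - \boldb_2 = \boldb_3 - \boldb_4$, then their full codeword differences agree as nonzero vectors, and the $B_2$ property of $\mathcal{A}^\omega_n$ forces the two pairs to coincide; in particular $\bolda = \bolda'$. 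Summing this uniqueness over all prefixes would then give
\[ \sum_\bolda (P_\bolda^2 - P_\bolda) \le D_f, \]
where $D_f = \sum_{k=0}^{\omega''}\binom{f}{k,k,f-2k}$ counts the ternary vectors in $\{-1,0,1\}^f$ realizable as $\boldb - \boldb'$ for weight-$\omega''$ binary strings. A standard type-counting estimate shows that the dominating term in $D_f$ sits at $k^{\ast} = \omega''(f-\omega'')/f$, which matches exactly the ternary distribution $(p_0,p_1,p_{-1})$ from the statement, and yields $\log D_f \le f H(p_0,p_1,p_{-1}) + O(\log n)$.

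Combined with Cauchy-Schwarz on $M = \sum_\bolda P_\bolda$, I would obtain $M^2 \le N_e \sum_\bolda P_\bolda^2 \le N_e (M + D_f)$, and I would then split into two regimes. If $M \le 2N_e$, then $\log M \le \log N_e + 1 \le eH(\omega'/e) + \log n$ for $n \ge 2$, matching the first term. Otherwise $M > 2N_e$ forces $M^2 - N_e M \ge M^2/2$, hence $M^2 \le 2 N_e D_f$, and taking logarithms yields $\log M \le \tfrac{1}{2}\bigl[eH(\omega'/e) + fH(p_0,p_1,p_{-1})\bigr] + 1$, matching the second term up to the additive logarithmic slack inherited from the type-counting bound on $D_f$.

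The main obstacle will be the transversal suffix-difference uniqueness claim together with a sufficiently sharp count of $D_f$: this is exactly the step that lets Cauchy-Schwarz produce the $\tfrac{1}{2}eH(\omega'/e)$ effective prefix contribution appearing in the second term of the lemma. Without this transversal uniqueness across distinct prefixes, a direct entropy split would give only the weaker prefix contribution $eH(\omega'/e)$, since $H(\boldA,\boldA') \le H(\boldA)+H(\boldA') \le 2eH(\omega'/e)$ is essentially tight when the prefix distribution is close to uniform over all $\binom{e}{\omega'}$ constant-weight strings.
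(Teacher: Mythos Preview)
Your transversal suffix-difference uniqueness claim and the Cauchy--Schwarz step are both correct, and match the paper's setup. The genuine gap is in your bound on $D_f$.

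You write $D_f=\sum_{k=0}^{\omega''}\binom{f}{k,\,k,\,f-2k}$ and assert that the dominating term sits at $k^{\ast}=\omega''(f-\omega'')/f$, yielding $\log D_f\le fH(p_0,p_1,p_{-1})+O(\log n)$. This is not right. The function $x\mapsto H(x,x,1-2x)$ is strictly increasing on $[0,1/3]$, so when $\omega''/f\le 1/3$ the sum is dominated by the boundary term $k=\omega''$, not by $k^{\ast}<\omega''$. The correct counting bound is therefore
\[
\log D_f \;\approx\; f\,H\!\left(\tfrac{\omega''}{f},\tfrac{\omega''}{f},1-\tfrac{2\omega''}{f}\right),
\]
which is strictly larger than $fH(p_0,p_1,p_{-1})$ whenever $0<\omega''<f/2$ (e.g.\ for $\omega''=0.1f$ one gets $\approx 0.922f$ versus $\approx 0.860f$). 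Plugging this into your case split gives a weaker second term than the lemma states, so the argument as written does not prove the claimed bound. The value $k^{\ast}=\omega''(f-\omega'')/f$ is the \emph{expected} number of $+1$'s when $(\boldb,\boldb')$ is a uniform pair of weight-$\omega''$ strings; it is not where the count of distinct difference vectors peaks.

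The paper closes this gap by replacing your support-counting bound with an entropy bound on the \emph{distribution} of the difference. It draws $(\boldsymbol{X},\boldsymbol{Y})$ uniformly from $\mathcal{D}$ and bounds $H(\boldsymbol{X}-\boldsymbol{Y})\le\sum_{i}H(X_i-Y_i)\le fH(\boldp^{\ast})$ by subadditivity and concavity. The constant-weight constraint enters through the identity $\sum_{i=1}^{f} n_{ij}=\omega''|\mathcal{S}_j|$, which via Cauchy--Schwarz forces the averaged zero-probability $p_0^{\ast}\ge \frac{(\omega'')^2+(f-\omega'')^2}{f^2}$. Because this is an entropy inequality rather than a type count, it carries no $O(\log n)$ slack, which is also why the second term in the lemma has only ``$+1$'' and not ``$+O(\log n)$''. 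Your approach, even with a corrected $D_f$, would incur that extra logarithmic term.
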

\begin{proof}
    Fix $\omega'$ and set $B^{\omega'}_n=|\mathcal{B}^{\omega'}_n|$. 
    Let $\{\bolda_1,\ldots,\bolda_{r}\}=\{\bolda:\bolda\boldb\in\mathcal{B}^{\omega'}_n \text{ for some $\boldb$}\}$ be the set of all possible prefixes of the codevectors in $\mathcal{B}^{\omega'}_n$. Note that each $\bolda_i$ has weight $\omega'$ and that there are at most $r$ different such vectors, where $r\le \binom{e}{\omega'}$. Let $\mathcal{S}_i=\{\boldb:\bolda_i\boldb\in\mathcal{B}^{\omega'}_n\}$, $i\in[r],$ be the (possibly empty) set of suffixes of codevectors in $\mathcal{B}^{\omega'}_n$ that have prefix $\bolda_i$. Then, 
    \begin{align}
    B^{\omega'}_n= \sum^r_{i=1}|\mathcal{S}_i|.
    \end{align}    
    Now consider the set of all pairs of suffixes that belong to the same group $\mathcal{S}_i$ for some $i\in[r]$, denoted by
    \begin{align}
    \mathcal{D}=\cup^r_{i=1}\{(\boldb_1,\boldb_2):\boldb_1,\boldb_2\in\mathcal{S}_i\}.
    \end{align}
    Note that $\boldb_1$ and $\boldb_2$ are allowed to be the same and that $(\boldb_1,\boldb_2)$ and $(\boldb_2,\boldb_1)$ are considered two different pairs, provided $\boldb_1,\boldb_2\in\mathcal{S}_i$, $i\in [r]$, are distinct. Hence, $\mathcal{D}$ is a multiset. 
     Then
     \begin{align} \label{eq:cauchy}|\mathcal{D}|=\sum^r_{i=1}|\mathcal{S}_i|^2\ge\frac{(\sum^r_i|\mathcal{S}_i|)^2}{r}=\frac{(B^{\omega'}_n)^2}{r}, 
     \end{align}
     where the bound follows from Cauchy-Schwarz's inequality. 
     Furthermore, consider the differences between all pairs in $\mathcal{D}$,     
     \begin{align}
        \mathcal{Z}=\{\boldb_1-\boldb_2:(\boldb_1,\boldb_2)\in \mathcal{D}\},
    \end{align}
    where $\mathcal{Z}$ is a multiset. The multiplicity of $0^f$ (the all $0$ vector of length $f$) in $\mathcal{Z}$ is exactly $B^{\omega'}_n$. In addition, the multiplicity of any nonzero element in $\mathcal{Z}$ is exactly one. To see this, suppose on the contrary that there exist different pairs of unequal elements $(\boldb_1,\boldb_2),(\boldb_3,\boldb_4)\in\mathcal{D}$ satisfying $\boldb_1-\boldb_2=\boldb_3-\boldb_4$. By definition of $\mathcal{D}$, we have that $\boldb_1,\boldb_2\in\mathcal{S}_i$ for some $i\in[r]$ and $\boldb_3,\boldb_4\in\mathcal{S}_j$ for some $j\in[1,r]$. This implies that $\bolda_i\boldb_1,\bolda_i\boldb_2,\bolda_j\boldb_3,\bolda_j\boldb_4$ are codevectors in $\mathcal{B}^{\omega'}_n$. Then, 
    \begin{align}
        \bolda_i\boldb_1-\bolda_i\boldb_2=\bolda_j\boldb_3-\bolda_j\boldb_4,
    \end{align}
    contradicting the fact that $\mathcal{B}^{\omega'}_n$ is a binary $B_2$-sequence.

Next, we generalize the derivation of an information-theoretic argument from~\cite{della2022note}. Uniformly 
at random pick a pair from $\mathcal{D}$ and denote the outcome by a pair of random variables $(\boldsymbol{X},\boldsymbol{Y})$. Then, the difference $\boldsymbol{X}-\boldsymbol{Y}$ is uniformly distributed over $\mathcal{Z}\backslash\{0^f\}$ (i.e, conditioned on $\boldsymbol{X}-\boldsymbol{Y}\ne 0^f$). Let $E$ be the event $\{\boldsymbol{X}-\boldsymbol{Y}\ne 0^f\}$. Then,
$H(\boldsymbol{X},\boldsymbol{Y})$ equals
    \begin{align}\label{eq:entropyeq}
        &H(\boldsymbol{X},\boldsymbol{Y},\boldsymbol{X}-\boldsymbol{Y})=H(\boldsymbol{X}-\boldsymbol{Y}) 
        +H(\boldsymbol{X},\boldsymbol{Y}|\boldsymbol{X}-\boldsymbol{Y}) \nonumber\\
        &=H(\boldsymbol{X}-\boldsymbol{Y})+\text{Pr}(E)H(\boldsymbol{X},Y|\boldsymbol{X}-\boldsymbol{Y},E)\nonumber\\
        &+\text{Pr}(E^c)H(\boldsymbol{X},\boldsymbol{Y}|\boldsymbol{X}-\boldsymbol{Y},E^c).
    \end{align}
Clearly, $H(\boldsymbol{X},\boldsymbol{Y}|\boldsymbol{X}-\boldsymbol{Y},E^c)=0$, since different 
nonzero elements in $\mathcal{Z}$ have multiplicity one. In addition, 
    \begin{align}
        H(\boldsymbol{X},\boldsymbol{Y}|\boldsymbol{X}-\boldsymbol{Y},E)=\log B^{\omega'}_n, \;\;
        \text{Pr}(E)=\frac{B^{\omega'}_n}{|\mathcal{D}|}\le \frac{r}{B^{\omega'}_n}, \notag
    \end{align}
    where the inequality follows from~\eqref{eq:cauchy}.
    Therefore,
    \begin{align}\label{eq:smallterms}
        &\text{Pr}(E)H(\boldsymbol{X},\boldsymbol{Y}|\boldsymbol{X}-\boldsymbol{Y},E)
        \le \frac{r}{B^{\omega'}_n}\log B^{\omega'}_n.
    \end{align}
    Combining~\eqref{eq:entropyeq},~\eqref{eq:smallterms} with $H(\boldsymbol{X},\boldsymbol{Y})=\log |\mathcal{D}|\ge \frac{(B^{\omega'}_n)^2}{r}$, we obtain
    \begin{align}\label{eq:bbound}
        \log (\frac{(B^{\omega'}_n)^2}{r})\le H(X-\boldsymbol{Y})+\frac{r}{B^{\omega'}_n}\log B^{\omega'}_n.
    \end{align}
In order to obtain an upper bound on $B^{\omega'}_n$, we need an upper bound on $H(\boldsymbol{X}-\boldsymbol{Y})$ specialized for constant-weight vectors.

    Let $n_{ij}$, $i\in[f]$, $j\in[r],$ be the number of suffixes in $\mathcal{S}_j$ whose $i$th coordinate is $1$.   
    By subadditivity of the entropy function we have 
    \begin{align}\label{eq:hxminusy}
        H(\boldsymbol{X}-\boldsymbol{Y})\le \sum^f_{i=1}H(X_i-Y_i) \le fH(\frac{\sum^f_{i=1}\boldp^i}{f}),
    \end{align}
    where $\boldp^i=(p^i_0,p^i_{1},p^i_{-1})$ is the distribution of $X_i-Y_i$,
     \begin{align}\label{eq:pi0}
         &p^i_0=\frac{\sum^r_{j=1}[n^2_{ij}+(|\mathcal{S}_j|-n_{ij})^2]}{\sum^r_{j=1}|\mathcal{S}_j|^2}, \;\;
         p^i_1=\frac{\sum^r_{j=1}n_{ij}(|\mathcal{S}_j|-n_{ij})}{\sum^r_{j=1}|\mathcal{S}_j|^2},\nonumber\\
         &p^i_{-1}=\frac{\sum^r_{j=1}n_{ij}(|\mathcal{S}_j|-n_{ij})}{\sum^r_{j=1}|\mathcal{S}_j|^2}.
     \end{align}
We show next that the average distribution, denoted as
     \begin{align}
     \boldp^*= \frac{\sum^f_{i=1}\boldp^i}{f},  
     \end{align}
satisfies $p^*_0\ge \frac{(\omega'')^2+(f-\omega'')^2}{f^2}$, where $\omega''=\omega-\omega'$ is the weight for all suffixes of codevectors in $\mathcal{B}^{\omega'}_n$, i.e., the weight of vectors in $\mathcal{S}_j$, $j\in[r]$. From \eqref{eq:pi0}, it follows 
     \begin{align}\label{eq:pstar0}
        p^*_0=  &\frac{\sum^f_{i=1}p^i_0}{f}=\frac{\sum^r_{j=1}\big(\sum^f_{i=1}[n^2_{ij}+(|\mathcal{S}_j|-n_{ij})^2]\big)}{f(\sum^r_{j=1}|\mathcal{S}_j|^2)}\nonumber\\
        \ge&\frac{\sum^r_{j=1}\big(\frac{(\sum^f_{i=1}n_{ij})^2}{f}+\frac{(\sum^f_{i=1}(|\mathcal{S}_j|-n_{ij})^2}{f}\big)}{f(\sum^r_{j=1}|\mathcal{S}_j|^2)}\nonumber\\
        \overset{(a)}{=}&\frac{\sum^r_{j=1}\big(|\mathcal{S}_j|^2(\omega'')^2+|\mathcal{S}_j|^2(f-\omega'')^2\big)}{f^2(\sum^r_{j=1}|\mathcal{S}_j|^2)}\nonumber\\
        =&\frac{(\omega'')^2+(f-\omega'')^2}{f^2},
     \end{align}
    where $(a)$ follows from the fact that the weight of the vectors in $\mathcal{S}_j$ is fixed and equal to $\omega''$. In addition, we have  $p^*_1=p^*_{-1}$ since $p^i_{1}=p^i_{-1}$ from \eqref{eq:pi0}. Note that $p^*_0\ge \frac{1}{2}$ and that the entropy function $H(p^*_0,\frac{1-p^*_0}{2},\frac{1-p^*_0}{2})$ is decreasing in $p^*_{0}$ when $p^*_{0}\ge \frac{1}{2}$. Therefore, combined with \eqref{eq:pstar0} and \eqref{eq:hxminusy}, we have  
    \begin{align}\label{eq:hxminusy1}
       H(\boldsymbol{X}-\boldsymbol{Y})\le fH(p_1,p_0,p_{-1}),
    \end{align}
    where $p_0=\frac{(\omega'')^2+(f-\omega'')^2}{f^2}$ and $p_1=p_{-1}=\frac{1-p_0}{2}$.
    Combining  \eqref{eq:bbound} \eqref{eq:hxminusy1}, 
    we obtain 
    \begin{align}\label{eq:logb}
       \log B^{\omega'}_n\le \frac{1}{2}(\log r +fH(p_0,p_1,p_{-1}))+ \frac{r}{2B^{\omega'}_n}\log B^{\omega'}_n.
    \end{align}
    Finally, to prove \eqref{eq:weightdalai1}, suppose to the contrary that
    \begin{align}\label{eq:logb2}
 &\log B^{\omega'}_n> e\,H\left(\frac{\omega'}{e}\right)+\log n, \text{ and}\nonumber\\
  &\log B^{\omega'}_n>\frac{1}{2}\Big[e\,H\left(\frac{\omega'}{e}\right)+fH(p_0,p_1,p_{-1})\Big]+1, 
\end{align}
From \eqref{eq:logb}, \eqref{eq:logb2}, and the fact that $r\le 2^{eH(\frac{\omega'}{e})}$, we have
\begin{align}\label{eq:contradiction}
    1<\frac{r}{2B^{\omega'}_n}\log B^{\omega'}_n,
\end{align}
as well as the inequality below which contradicts~\eqref{eq:contradiction}:
\begin{align}
    B^{\omega'}_n>n2^{e\,H\left(\frac{\omega'}{e}\right)}\ge nr\ge r\log B^{w'}_n.
\end{align}
\vspace{-0.1in}
 \end{proof}
By combining Lemma \ref{lemma:weightdalai} and \eqref{eq:aomega} we conclude that
\begin{align}
    \log A^{\omega}_n\le& \max_{\omega'\in[\max\{0,\omega-f\},\min\{e,\omega\}]}\max\big\{e\,H\left(\frac{\omega'}{e}\right), \nonumber\\
  &\frac{1}{2}\Big[e\,H\left(\frac{\omega'}{e}\right)+fH(p_0,p_1,p_{-1})\Big]\big\}+\log n,
\end{align}
where $p_0=\frac{(\omega'')^2+(f-\omega'')^2}{f^2}$,  $p_1=p_{-1}=\frac{1-p_0}{2}$, and 
$\omega''=\omega-\omega'$, for any choice of $e$ and $f$ such that $e+f=n$. Therefore, 
an upper bound on $R_{\bar\omega}$ is given by 
\begin{align}\label{eq:upperbounddalai}
    R_{\bar\omega}\le&\max_{\bar e\in[0,1]} \max_{\bar\omega'\in[\max\{0,\bar\omega-1+\bar e\},\min\{\bar e,\bar\omega\}]}\max\big\{\bar e\,H\left(\frac{\bar\omega'}{\bar e}\right), \nonumber\\
  &\frac{1}{2}\Big[\bar e\,H\left(\frac{\bar\omega'}{\bar e}\right)+(1-\bar e)H(p_0,p_1,p_{-1})\Big]\big\},
\end{align}
where $p_0=\frac{(\bar\omega'')^2+(1-\bar e-\bar\omega'')^2}{(1-\bar e)^2}$,  $p_1=p_{-1}=\frac{1-p_0}{2}$, and 
$\bar\omega''=\bar\omega-\bar\omega'$. Note that the bound \eqref{eq:upperbounddalai} is smaller than the bound $0.6$ in~\cite{lindstrom1972b2} whenever $\omega< \frac{n}{2}$, and is smaller than the best known upper bound $0.5753$ for unconstrained binary $B_2$-sequences reported in~\cite{cohen2001binary} whenever $\omega\le 0.345n$. See Table \ref{tab:results} for more details regarding the actual values of the upper bounds. 

\section{A Lower Bound} \label{sec:lowerbound}
We describe next a construction for constant-weight binary $B_2$ codes $\mathcal{A}^\omega_n$ of size $(\frac{n}{\omega})^{\frac{\omega}{2}+o(\omega)}$ and 
$$(\left\lfloor\frac{n}{\omega}\right\rfloor)^{\frac{\omega\lceil\frac{n}{\omega}\rceil-n}{2(\lceil\frac{n}{\omega}\rceil-\lfloor\frac{n}{\omega}\rfloor)}}(\left\lceil\frac{n}{\omega}\right\rceil)^{\frac{n-\omega\lfloor\frac{n}{\omega}\rfloor}{2(\lceil\frac{n}{\omega}\rceil-\lfloor\frac{n}{\omega}\rfloor)}}2^{o(\omega)},$$ 
for the case that $\frac{n}{\omega}$ is an integer and a noninteger real value, respectively. The construction implies that $R_{\bar\omega}\ge \frac{\bar\omega}{2}\log (\frac{1}{\bar\omega})$ whenever $\frac{1}{\bar\omega}$ is an integer, and $$R_{\bar\omega}\ge \frac{\bar\omega\lceil\frac{1}{\bar\omega}\rceil-1}{2(\lceil\frac{1}{\bar\omega}\rceil-\lfloor\frac{1}{\bar\omega}\rfloor)}\log (
\left\lfloor\frac{1}{\bar\omega}\right\rfloor)+\frac{1-\bar\omega\lfloor\frac{1}{\bar\omega}\rfloor}{2(\lceil\frac{1}{\bar\omega}\rceil-\lfloor\frac{1}{\bar\omega}\rfloor)}\log (\left\lceil\frac{1}{\bar\omega}\right\rceil)$$ otherwise. An important observation is that our construction, although conceptually simple, results in codes with rate at least $\frac{1}{4}$th of the largest possible rate of unconstrained constant-weight codes, $\binom{n}{\omega}$. 

In what follows, we assume for simplicity that $\frac{n}{\omega}$ is an integer, and as before, we let $\omega\le \frac{n}{2}$. The idea is to find a surjective linear mapping $F:\{0,1,2\}^n\rightarrow [0,(\frac{n}{\omega})^\omega-1]$ that converts any length-$n$ vector over the alphabet $\{0,1,2\}$ into an integer in $[0,(\frac{n}{\omega})^\omega-1]$. More precisely, the mapping $F$ is required to satisfy the following two properties:
\begin{enumerate}
\item[(A)] For any integer $i\in[0,(\frac{n}{\omega})^\omega-1]$, there exists a vector $\boldc\in\{0,1\}^n$ of weight $\omega$, such that $F(\boldc)=i$.
    \item[(B)] For any vectors $\boldc,\boldc'\in\{0,1\}^n$, we have that $F(\boldc)+F(\boldc')=F(\boldc+\boldc')$. Note that $\boldc+\boldc'\in\{0,1,2\}^n$.
\end{enumerate} 
Given the mapping $F$, we construct an integer Sidon set~\cite{ruzsa1998infinite} from the set $[0,(\frac{n}{\omega})^\omega-1]$. By the Bose-Chawla construction, there exists a set of integers $\{i_1,\ldots,i_{(\frac{n}{\omega})^{\frac{\omega}{2}+o(\omega)}}\}\subset [0,(\frac{n}{\omega})^\omega-1]$ of size $(\frac{n}{\omega})^{\frac{\omega}{2}+o(\omega)}$ such that the sums of any two integers in the set are distinct. Then from property (A) of the mapping $F$, for every $i_j$, $j\in [1,(\frac{n}{\omega})^{\frac{\omega}{2}+o(\omega)}]$, there exists a vector $\boldc_j\in\{0,1\}^n$ of weight $\omega$  such that $F(\boldc_j)=i_j$. Finally, by property (B) of the mapping $F$ and the definition of the set $\{i_1,\ldots,i_{(\frac{n}{\omega})^{\frac{\omega}{2}+o(\omega)}}\}$, the set $\{\boldc_1,\ldots,\boldc_{(\frac{n}{\omega})^{\frac{\omega}{2}+o(\omega)}}\}$ is a binary $B_2$ codebook of weight $\omega$.

For any integer $k\in[0,n-1]$, let $k=a_k\omega+b_k$, where $a_k=\lfloor \frac{k}{\omega}\rfloor$ and $b_k=k\bmod \omega$. For any $\boldc\in\{0,1,2\}^n$ define
 \begin{align}
     F(\boldc)\triangleq \sum^n_{i=1}a_{i-1}\left(\frac{n}{\omega}\right)^{b_{i-1}}c_i.
 \end{align}
 It is obvious that $F$ satisfies property (B). To show that $F$ satisfies (A), we note that any integer $m\in [0,(\frac{n}{\omega})^\omega-1]$ has a $\frac{n}{\omega}$-ary representation $m = \sum^{\omega-1}_{i=0}m_i\,\left(\frac{n}{\omega}\right)^i,$
where $m_i\in [0,\frac{n}{\omega}-1]$. Let $\boldc_m$ be a vector in $\{0,1\}^n$, whose indices of the $1$ bits are given by $\{m_i\omega+i:i\in[0,\omega-1]\}$. Then, $\boldc_m$ has weight $\omega$ and $F(\boldc_m)=\sum^{\omega-1}_{i=0}a_{m_i\omega+i}(\frac{n}{\omega})^{b_{m_i\omega+i}}= m$. Hence $F$ satisfies property (B).	


\bibliographystyle{IEEEtran}
\bibliography{biblio1,biblio2}
\newpage
\appendix
\section{Proof of Lemma \ref{lemma:weightlindstrom}}\label{sec:proofoflindstrom}
\vspace{-0.1in}

We split every vector $\boldc=\bolda\boldb\in \mathcal{A}^\omega_n$ into $\bolda\in\{0,1\}^e$ and $\boldb\in\{0,1\}^f$. Next, we group the vectors $\boldc\in\mathcal{A}^\omega_n$ based on the weight of $\bolda$, and use the definitions $\mathcal{B}^{\omega'}=\{\boldc:\boldc=\bolda\boldb\in\mathcal{A}^\omega_n,\;\bolda\in\{0,1\}^e,\;|\bolda|=\omega'\}$. 
\begin{lemma}\label{lemma:weightlindstrom}
For any $\omega'\in[\max\{0,\omega-f\},\min\{e,\omega\}]$, we have
\begin{align}\label{eq:weight}
  \log |\mathcal{B}^{\omega'}_n|\le &\max\Bigg\{H(\frac{\omega'}{e})\cdot e+\log (n+1)+1, \nonumber\\
  &\frac{1}{2}\Big[H(\frac{\omega'}{e})\cdot e+\log (n+1)+H(\frac{2\omega''}{f})\cdot f\nonumber\\
  &+2\omega''+\log (n(n+1))\Big]+1\Bigg\}, 
\end{align}
where $\omega''=\min\{\frac{f}{4},\omega-\omega'\}$ and $H(x)=-x\log_2x-(1-x)\log_2(1-x)$ is the entropy function.
\end{lemma}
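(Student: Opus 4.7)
The plan is to adapt Lindstr\"om's prefix-suffix decomposition to the constant-weight setting, combining it with a ternary counting bound that exploits the fact that every suffix has the same weight $\omega-\omega'$. First, I would partition $\mathcal{B}^{\omega'}_n$ by its weight-$\omega'$ prefix: letting $\{\bolda_1,\ldots,\bolda_r\}$ denote the distinct prefixes that appear, we have $r\leq \binom{e}{\omega'}\leq 2^{eH(\omega'/e)}$, and writing $\mathcal{S}_i=\{\boldb:\bolda_i\boldb\in \mathcal{B}^{\omega'}_n\}$ gives $|\mathcal{B}^{\omega'}_n|=\sum_{i=1}^r |\mathcal{S}_i|$ with every $\boldb\in \mathcal{S}_i$ a length-$f$ binary vector of weight $\omega-\omega'$.

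Next, I would apply the $B_2$ property to the \emph{within-prefix} codeword pair sums. A collision $\bolda_i\boldb_1+\bolda_i\boldb_2 = \bolda_j\boldb_3+\bolda_j\boldb_4$ forces $i=j$ (by comparing the prefix halves) and $\{\boldb_1,\boldb_2\}=\{\boldb_3,\boldb_4\}$, so the within-group unordered suffix pair sums (allowing $\boldb_1=\boldb_2$) are all distinct across the $r$ groups and live in the set of ternary vectors in $\{0,1,2\}^f$ of total weight $2(\omega-\omega')$. Writing $N$ for the number of such ternary vectors, this yields the key inequality
\[
\sum_{i=1}^r \binom{|\mathcal{S}_i|+1}{2} \leq N.
\]
Convexity of $x\mapsto x(x+1)/2$ on the left-hand side gives $r\binom{B^{\omega'}_n/r+1}{2}\leq N$, which rearranges to $(B^{\omega'}_n)^2 + r\,B^{\omega'}_n \leq 2rN$, i.e.\ $\log B^{\omega'}_n \leq \tfrac{1}{2}(\log r + \log N) + O(1)$.

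To bound $N$, I would use that every ternary vector of weight $2\omega''$ has at most $2\omega''$ nonzero entries, each labeled either $1$ or $2$, giving $N\leq (n+1)\binom{f}{2\omega''}2^{2\omega''}$ and hence $\log N\leq fH(2\omega''/f)+2\omega''+\log(n+1)$. This estimate is only monotone increasing in $\omega''$ on the range $\omega''\leq f/4$, which is exactly why the lemma caps $\omega''$ at $\min\{f/4,\omega-\omega'\}$: once $\omega-\omega'$ exceeds $f/4$ we replace it by $f/4$ so that the counting bound stays inside its controllable regime. Feeding $\log r\leq eH(\omega'/e)$ and this $\log N$ bound back into the Cauchy-Schwarz inequality above produces the second expression inside the max of~\eqref{eq:weight}.

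For the $\max$ structure itself, I would run a contradiction argument in the style of the proof of Lemma~\ref{lemma:weightdalai}: assuming that both bounds in the statement are violated simultaneously, the violated first bound forces $B^{\omega'}_n>2(n+1)\,r$, and substituting this into the Cauchy-Schwarz inequality produces an estimate on $\log B^{\omega'}_n$ that contradicts the violated second bound, so at least one of the two must hold. The main obstacle I expect is the bookkeeping of the polynomial slacks $\log(n+1)$ and $\log(n(n+1))$: one has to trace the polynomial prefactors coming from the union bound over support sizes in the $N$ estimate, the looseness in $r\leq 2^{eH(\omega'/e)}$, and the additive $\tfrac{1}{2}$ constants from Cauchy-Schwarz, and verify that they all fit inside the logarithmic terms of the lemma so that the contradiction closes with room to spare.
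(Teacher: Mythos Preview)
Your approach diverges from the paper's in an essential way: you count within-prefix suffix \emph{sums} $\boldb_1+\boldb_2\in\{0,1,2\}^f$ and try to bound their number $N$ directly, whereas the paper works with \emph{differences} $\boldb_1-\boldb_2\in\{-1,0,1\}^f$ and invokes Lindstr\"om's combinatorial inequality $\sum_{i,j}h_{ij}\ge 0$ (with $h_{ij}=1$ if the $j$th coordinate of the $i$th difference vanishes and $h_{ij}=-1$ otherwise). That distinction is exactly what makes the regime $\omega-\omega'>f/4$ go through, and this is where your argument has a genuine gap.

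Your estimate $N\le (n+1)\binom{f}{2\omega''}2^{2\omega''}$ comes from choosing a support of size at most $2\omega''$ and assigning each support position a $1$ or a $2$. When $\omega-\omega'\le f/4$ this is fine: the sums have weight $2(\omega-\omega')=2\omega''\le f/2$, so the support really is at most $2\omega''$. But when $\omega-\omega'>f/4$ you set $\omega''=f/4$ and assert the same bound; this is false. The sums still have weight $2(\omega-\omega')>f/2$, and the number of ternary length-$f$ vectors of weight near $f$ can be as large as the central trinomial coefficient, of order $3^f$, which exceeds $\binom{f}{f/2}2^{f/2}\approx 2^{3f/2}$ by an exponential factor. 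Capping $\omega''$ at $f/4$ makes the \emph{formula} $fH(2\omega''/f)+2\omega''$ stop growing, but it does not make it an upper bound on $\log N$.

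The paper's difference-based argument avoids this because the Lindstr\"om inequality turns difference vectors of support larger than $f/2$ into a \emph{negative} contribution, yielding
\[
|\mathcal{D}|\le (f+1)B^{\omega'}_n+(f+1)\cdot\#\bigl\{\boldv:1\le |\boldv|\le \min(f/2,\,2(\omega-\omega'))\bigr\},
\]
so only supports up to $\min(f/2,2(\omega-\omega'))$ ever need to be counted, and that is where the cap $\omega''=\min\{f/4,\omega-\omega'\}$ legitimately enters. The additive $(f+1)B^{\omega'}_n$ term is also what generates the two-branch $\max$ in the lemma after combining with $(B^{\omega'}_n)^2/r\le |\mathcal{D}|$; your single inequality $\sum_i\binom{|\mathcal{S}_i|+1}{2}\le N$ has no such term, so the ``contradiction argument for the max structure'' you sketch would not actually arise from your setup even if your $N$ bound were valid.
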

\begin{proof}
The first part of the proof follows along the same lines as that of Lemma~\ref{lemma:weightdalai}. We set $\{\bolda_1,\ldots,\bolda_r\}=\{\bolda:\bolda\in\{0,1\}^e,\bolda\boldb\in\mathcal{B}^{\omega'}_n \text{ for some $\boldb$}\}$, $\mathcal{S}_i=\{\boldb:\bolda_i\boldb\in\mathcal{B}^{\omega'}_n\}$, and $\mathcal{D}=\cup^r_{i=1}\{(\boldb_1,\boldb_2):\boldb_1,\boldb_2\in\mathcal{S}_i\}$. This establishes~\eqref{eq:cauchy}. The remainder of the proof differs from the one provided for Lemma~\ref{lemma:weightdalai}, as we use combinatorial arguments~\cite{lindstrom1972b2}. 

 Let $\{\boldv_1,\ldots,\boldv_{|\mathcal{D}|}\}=\{\boldb_1-\boldb_2:(\boldb_1,\boldb_2)\in \mathcal{D}\}$ be the set of vectors that are the differences of the pairs of suffixes in $\mathcal{D}$, with multiplicities. Then, the multiplicity of $0^f$ in $\{\boldv_1,\ldots,\boldv_{|\mathcal{D}|}\}$ is $B^{\omega'}_n$ and the multiplicity of each nonzero vector in $\{\boldv_1,\ldots,\boldv_{|\mathcal{D}|}\}$ is one. For $i\in[|\mathcal{D}|],$ $j\in[1,f],$ let $h_{ij}=1$, if the the $j$-th bit of $\boldv_i$ is $0$, and $h_{ij}=-1$ otherwise. Then, $\sum^{|\mathcal{D}|}_{i=1}h_{ij}\ge 0$ for $j\in[1,f]$ and $\sum^{|\mathcal{D}|}_{i=1}\sum^f_{j=1}h_{ij}\ge 0$.
    
Since $\sum^f_{j=1}h_{ij}=f-2k$ for each $\boldv_i$ with $k$ non-zero entries, we have $\sum^f_{j=1}h_{ij}\le 0$ for any $\boldv_i$ having at least $\frac{f}{2}$ non-zero entries. In addition, the number of possible difference vectors with $k$ non-zero entries is at most $\binom{f}{k}2^k$. Let $|\boldv_i|$ denote the number of nonzero entries in $\boldv_i$, $i\in[|\mathcal{D}|]$. Then,
    \begin{align}\label{eq:sumofhx}       
    0\le \sum^{|\mathcal{D}|}_{i=1}\sum^f_{j=1}h_{ij} &\overset{(a)}{\le} fB^{\omega'}_n+\sum_{i:|\boldv_i|\ge 1}(\sum^f_{j=1}h_{ij})\nonumber\\
        &\overset{(b)}{\le}fB^{\omega'}_n+\sum_{i:1\le|\boldv_i|\le \min\{\frac{f}{2},2(\omega-\omega')\}}(\sum^f_{j=1}h_{ij})\nonumber\\
        -\sum_{i:|\boldv_i|> \frac{f}{2}}1 &\overset{(c)}{\le} fB^{\omega'}_n+ 2f\omega''\binom{f}{2\omega''}2^{2\omega''} -\sum_{i:|\boldv_i|>\frac{f}{2}}1, \nonumber\\
    \end{align}
where $\omega'' = \min\{\omega-\omega',\frac{f}{4}\}$, $(a)$ follows from the fact that the multiplicity of $0^f$ in $\{\boldv_1,\ldots,\boldv_s\}$ is $B^{\omega'}_n$, $(b)$ follows from the fact that $\sum^f_{j=1}h_{ij}< 0$ when $|\boldv_i|> \frac{f}{2}$ and the fact that $|\boldv_i|\le 2(\omega-\omega')$, and $(c)$ follows from the fact that the number of $\boldv_i$ with $|\boldv_i|=k$ is at most $\binom{f}{k}2^k$ and the fact that this number increases with $k$ when $k\le \min\{\frac{f}{2},2(\omega-\omega')\}$. 
    Eq. \eqref{eq:sumofhx} implies that
    \begin{align}\label{eq:sumofxupper}
        |\mathcal{D}|\le &(f+1)|\mathcal{B}^{\omega'}_n|+2(f+1)\omega''\binom{f}{2\omega''}2^{2\omega''}\nonumber\\
        \le&(f+1)|\mathcal{B}^{\omega'}_n|+2^{fH(\frac{2\omega''}{f})+2\omega''+\log (n(n+1))}
    \end{align}
Combining \eqref{eq:cauchy} and \eqref{eq:sumofxupper}, we have 
    \begin{align*}
         |\mathcal{B}^{\omega'}_n|^2\le (f+1)|\mathcal{B}^{\omega'}_n|r+2^{fH(\frac{2\omega''}{f})+2\omega''+\log (n(n+1))}r.
    \end{align*}
    Since $r\le 2^{eH(\frac{\omega'}{e})}$, we then have~\eqref{eq:weight}. Based on Lemma \ref{lemma:weightlindstrom} and \eqref{eq:aomega}, we have \eqref{eq:upperboundlindstrom}.
\end{proof}
\end{document}